\newcommand{\w}{\omega}
\newtheorem{theorem}{Theorem}
\journal{Graphs and Combinatorics}
\def\@author#1{\g@addto@macro\elsauthors{\normalsize%
    \def\baselinestretch{1}%
    \upshape\authorsep#1\unskip\textsuperscript{%
      \ifx\@fnmark\@empty\else\unskip\sep\@fnmark\let\sep=,\fi
      \ifx\@corref\@empty\else\unskip\sep\@corref\let\sep=,\fi
      }%
    \def\authorsep{\unskip,\space}%
    \global\let\@fnmark\@empty
    \global\let\@corref\@empty  
    \global\let\sep\@empty}%
    \@eadauthor={#1}
}
\begin{document}

\begin{frontmatter}


\title{Colouring of $(P_3 \cup P_2)$-free graphs}

 \author{Arpitha P. Bharathi\fnref{label1}\corref{cor1}}
 \ead{arpitha.p.bharathi@gmail.com}
 \author{Sheshayya A. Choudum\fnref{label1}}
 \ead{sac@retiree.iitm.ac.in}
 \fntext[label1]{Department of Mathematics, Christ University, Bengaluru 560029, India}
 \cortext[cor1]{Corresponding author}



\begin{abstract}
The class of $2K_{2}$-free graphs and its various subclasses have been studied in a variety of contexts. In this paper, we are concerned with the colouring of $(P_{3}\cup P_{2})$-free graphs, a super class of $2K_{2}$-free graphs. We derive a $O(\omega^{3})$ upper bound for the chromatic number of $(P_{3} \cup P_{2})$-free graphs, and sharper bounds for $(P_{3} \cup P_{2}$, diamond)-free  graphs, where $\omega$ denotes the clique number. By applying similar proof techniques we obtain chromatic bounds for $(2K_{2},$ diamond)-free graphs. The last two classes are perfect if $\omega \geq 5$ and $\geq 4$ respectively.
\end{abstract}

\begin{keyword}
Colouring \sep Chromatic number \sep Clique number \sep $2K_{2}$-free graphs \sep $(P_3 \cup P_2)$-free graphs \sep Diamond \sep Perfect graphs

\MSC 05C15 \sep 05C17

\end{keyword}

\end{frontmatter}


\section{Introduction}\label{sec:intro}
A graph $G$ is said to be $H$-free, if $G$ does not contain an induced copy of $H$. More generally, a class of graphs $\mathcal{G}$ is said to be $(H_{1},H_{2},\cdots)$-free if every $G \in$ $\mathcal{G}$ is $H_{i}$-free, for $i \geq 1$.
The class of $2K_{2}$-free graphs and its subclasses are subject of research in various contexts: domination (El-Zahar and Erd\"{o}s \cite{ElZahar}), size (Bermond et al. \cite{Bermond}, 
Chung et al. \cite{Chung}), vertex colouring (Wagon \cite{Wagon}, Nagy and Szentmiklossy \cite{ENS}, Gy\'{a}rf\'{a}s \cite{Gyarfas}), edge colouring (Erd\"{o}s and Nesetril \cite{ErdosNesetril}) and algorithmic complexity (Blazsik et al. \cite{Blazsik}). Here we are concerned with the colouring of $(P_{3} \cup P_{2})$-free graphs, a super class of $2K_{2}$-free graphs. A comprehensive result of Kral et al. \cite{Kral} states that the decision problem of COLOURING $H$-free graphs is P-time solvable if $H$ is an induced subgraph of $P_{4}$ or $P_{3} \cup P_{1}$,and it is NP-complete for any other graph $H$. In particular, COLOURING $2K_{2}$-free graphs is NP-complete. However, there have been several studies to obtain tight upper and lower bounds for the chromatic number of $2K_{2}$-graphs. A problem of Gy\'{a}rf\'{a}s \cite{Gyarfas} asks for the smallest function $f(x)$ such that $\chi(G) \leq f(\omega(G)$, for every $G$ belonging to the class of $2K_{2}$-free graphs, where $\chi(G)$ and $\omega(G)$ respectively denote the chromatic number and clique number of $G$. This problem is still open. In this respect, an often quoted result is due to Wagon \cite{Wagon}. It states that if a graph $G$ is $2K_{2}$-free, then $\chi(G)\leq \binom{\omega(G)+1}{2}$. We look more closely at Wagon's proof and obtain a $O (\omega^{3})$  upper bound for the chromatic number of $(P_{3} \cup P_{2})$-free graphs, and  sharper bounds for $(P_{3} \cup P_{2}$, diamond)-free  graphs.  By applying similar proof techniques we obtain chromatic bounds for $(2K_{2},$ diamond)-free graphs. The last  two classes are perfect if the clique number is $\geq 5$ and $\geq 4$ respectively. The classes of $(H,$ diamond)-free graphs and $(H_1,H_2,$ diamond)-free graphs, for various graphs $H,H_1$ and $H_2$, have been studied in many papers; see Arbib and Mosca \cite{Arbib}, Brandst\"{a}dt \cite{Brandstadt}, Choudum and Karthick \cite{SAC}, Karthick and Maffrey \cite{Karthick}, Gy\'{a}rf\'{a}s \cite{Gyarfas}, and Randerath and Schiermeyer \cite{Rand}. See also a comprehensive book on problems of graph colourings by Jensen and Toft \cite{Jensen} and an extensive book of Brandtst\"{a}dt et al. \cite{Brandstadtetal}, for interesting subclasses and superclasses of $2K_{2}$-free graphs.

\section{Terminology and Notation}\label{sec:term}
We follow standard terminology of Bondy and Murty \cite{Bondy}, and West \cite{West}. All our graphs are simple and undirected. If  $u$, $v$ are two vertices of a graph $G(V,E)$, then their adjacency is denoted by $u\leftrightarrow v$, and non-adjacency by $u\nleftrightarrow v$. $P_{n}, C_{n}$ and $K_{n}$ respectively denote the path, cycle and complete graph on $n$ vertices. A chordless cycle of length $\geq 5$ is called a \emph{hole}. If $S \subseteq V(G)$, then $[S]$ denotes the subgraph induced by $S$. If $S$ and $T$ are two disjoint subsets of $V(G)$, then $[S,T]$ denotes the set of edges $\{st\in E(G): s \in S$ and $t \in T \}$. A subset $Q$ of $V(G)$ is called a \textit{clique} if  any two vertices in $Q$ are adjacent. The \textit{clique number} of $G$ is defined to be max$\{|Q|: Q$ is a clique in $G\}$; it is denoted by $\omega(G)$. A clique $Q$ is called a \textit{maximum clique} if $|Q| = \omega(G))$. A subset $I$ of $V(G)$ is called an \textit{independent set} if no two vertices in $I$ are  adjacent. A \textit{k-vertex colouring} or a \emph{k-colouring} or a \emph{colouring} is a function $f: V(G) \rightarrow \{1,2,\cdots ,k\}$ such that $f(u) \neq f(v)$, for any two adjacent vertices $u$, $v$ in $G$. It is also referred to as a proper colouring of $G$ for emphasis. The \textit{chromatic number} $\chi(G)$ of $G$ is defined to be min$\{k: G$ admits a \emph{k}-colouring$\}$. If $G_{1},G_{2},\cdots ,G_{k}$ are vertex disjoint graphs, then  $G_{1}\cup G_{2}\cup \cdots \cup G_{k}$ denotes the graph with vertex set $\bigcup_{i=1}^{k}V(G_{i})$ and edge set  $\bigcup_{i=1}^{k}E(G_{i})$. If $G_{1}\simeq G_{2} \simeq \cdots \simeq G_{k}\simeq H$, for some $H$, then $G_{1}\cup G_{2}\cup \cdots \cup G_{k}$ is denoted by $kH$.  The three graphs which appear frequently in this paper are shown in Fig.1.

\begin{figure}[!htb]
\minipage{0.32\textwidth}
		\scalebox{1.3}{
		\begin{tikzpicture}[line cap=round,line join=round,>=triangle 45,x=1.0cm,y=1.0cm]
\clip(1.5,1.5) rectangle (3.5,3.5);
\draw (2.,3.)-- (3.,3.);
\draw (2.,2.)-- (3.,2.);
\begin{scriptsize}
\draw [fill=black] (2.,3.) circle (1.2pt);
\draw [fill=black] (2.,2.) circle (1.2pt);
\draw [fill=black] (3.,3.) circle (1.2pt);
\draw [fill=black] (3.,2.) circle (1.2pt);
\end{scriptsize}
\end{tikzpicture}

		}
\endminipage\hfill
\minipage{0.32\textwidth}
		\scalebox{1.3}{
		\begin{tikzpicture}[line cap=round,line join=round,>=triangle 45,x=1.0cm,y=1.0cm]
\clip(1.5,1.5) rectangle (4.5,3.5);
\draw (2.,3.)-- (3.,3.);
\draw (3.,3.)-- (4.,3.);
\draw (2.5,2.)-- (3.5,2.);
\begin{scriptsize}
\draw [fill=black] (2,3) circle (1.2pt);
\draw [fill=black] (3,3) circle (1.2pt);
\draw [fill=black] (4,3) circle (1.2pt);
\draw [fill=black] (2.5,2) circle (1.2pt);
\draw [fill=black] (3.5,2) circle (1.2pt);
\end{scriptsize}
\end{tikzpicture}

    }
\endminipage\hfill
\minipage{0.32\textwidth}
	\centering
		\scalebox{.6}{
		\begin{tikzpicture}[line cap=round,line join=round,>=triangle 45,x=1.0cm,y=1.0cm]
\clip(5.5,-1.5) rectangle (8.5,3.5);
\draw (6.,1.)-- (8.,1.);
\draw (7.02,2.76)-- (6.,1.);
\draw (7.02,2.76)-- (8.,1.);
\draw (7.02,-0.78)-- (8.,1.);
\draw (7.02,-0.78)-- (6.,1.);
\begin{scriptsize}
\draw [fill=black] (6.,1.) circle (2.5pt);
\draw [fill=black] (8.,1.) circle (2.5pt);
\draw [fill=black] (7.02,2.76) circle (2.5pt);
\draw [fill=black] (7.02,-0.78) circle (2.5pt);
\end{scriptsize}
\end{tikzpicture}
   }
\endminipage\hfill
\caption{$2K_2$, $P_3 \cup P_2$, Diamond}
\end{figure}
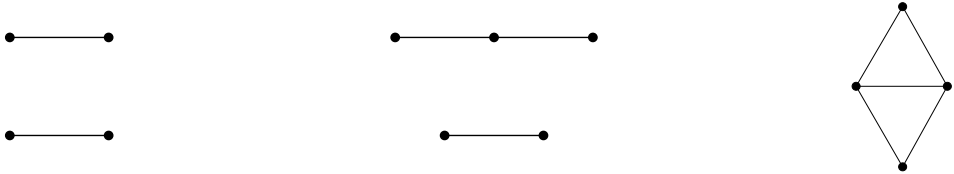

\section{A partition of the vertex set of a graph.}

Throughout this paper we use a particular partition of the vertex set of a graph $G(V,E)$ and use its properties. Some of these properties are due to Wagon \cite{Wagon}, but  are restated for ready reference. In what follows, $\w$ denotes the clique number of a graph  under consideration.\\
\\Let $A$ be a maximum clique in $G$ with vertices $1, 2,\cdots ,\omega$. We iteratively define the sets $C_{ij}$  in the lexicographic order of pairs of vertices $i$, $j$ of $A$.\\
\\$C=\phi$\\
for $i$ : 1 to $\w$\\
$\;\;\;$    for $j$ : $i+1$ to $\w$\\
$\qquad    C_{ij}=\{v\in V(G)-C \: | \: v \nleftrightarrow i$ and $v \nleftrightarrow j\};$\\
$\qquad    C=C\cup C_{ij};$\\
$\quad    $end\\
end\\
\\

By definition, there are  $\binom{\omega}{2}$  number of $C_{ij}$ sets and these are pairwise disjoint. Also, every vertex in $C_{ij}$ is adjacent to every vertex $k$ of $A$, where $1\leq k < j, k \neq i$. Moreover, every vertex in $V(G) - A$ which is  non-adjacent to  two or more vertices of $A$ is in some $C_{ij}$. So, every vertex $v \in V(G)- (A \cup C)$ is adjacent to all the vertices of $A$ or $|A| - 1$  vertices  of $A$. The former case is impossible, since $A$ is a maximum clique. Hence we define the following sets. For $a \in A$, let
\begin{center}
$I_a=\{v\in V(G) - (A \cup C)\;\mid\;v\leftrightarrow\ x , \forall x\in A-\{a\}$ and $ v\nleftrightarrow a\}$.
\end{center}

By the above remarks, we conclude that $(A ,\bigcup\limits_{i,j} C_{ij}, \bigcup\limits_{a\in A} I_{a})$ is a partition of $V(G)$.

\section{Colouring of \texorpdfstring{$(P_3 \cup P_2)$}{Lg}-free graphs}

We first observe a few properties of the sets $C_{ij}$ and $I_{a}$, and then obtain an $O(\omega^{3})$ upper bound for the chromatic number of a $(P_3 \cup P_2)$-free graph.

\begin{theorem}\label{thm:P3UP2}
If a graph G is $(P_3 \cup P_2)$-free, then $\chi(G)\leq \frac{\omega(\omega+1)(\omega+2)}{6}$.
\end{theorem}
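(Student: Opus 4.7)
The plan is to exploit the partition $V(G) = A \cup \bigcup_{i<j} C_{ij} \cup \bigcup_{a \in A} I_a$ introduced in Section 3 and to colour each piece individually, re-using the colour of $a$ on the whole independent set $\{a\}\cup I_a$ but using fresh palettes on each $C_{ij}$.

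I would begin by establishing three structural observations. First, $[I_a]$ is independent for every $a \in A$; otherwise an edge $uv$ inside $I_a$, together with $A \setminus \{a\}$, would yield an $(\omega+1)$-clique in $G$, contradicting the maximality of $A$. Second, $[C_{ij}]$ is $P_3$-free for every pair $i<j$: three vertices $u,v,w \in C_{ij}$ inducing a $P_3$, together with the edge $ij$ of $A$ (which has no edges to $u,v,w$ by the definition of $C_{ij}$), would produce an induced copy of $P_3 \cup P_2$ in $G$. Consequently $[C_{ij}]$ is a disjoint union of cliques. Third, every clique $Q \subseteq C_{ij}$ satisfies $|Q|\leq \omega-j+2$: by the properties of the partition recalled in Section 3, every vertex of $C_{ij}$ is adjacent to each vertex of $\{1,2,\ldots,j-1\}\setminus\{i\}$, a $(j-2)$-subset of $A$, so $Q \cup (\{1,\ldots,j-1\}\setminus\{i\})$ is a clique of $G$ of size $|Q|+j-2 \leq \omega$.

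From the second and third observations, the chromatic number of $[C_{ij}]$ is at most $\omega - j + 2$. I would then assemble a proper colouring as follows: assign each $a\in A$ its own colour $c_a$ and extend $c_a$ to every vertex of $I_a$, which is legitimate because $\{a\}\cup I_a$ is independent by the first observation together with the defining property of $I_a$; for each pair $i<j$, colour $[C_{ij}]$ using $\omega-j+2$ fresh colours, disjoint from the $\omega$ colours of $A$ and from the palette of every other $C_{i'j'}$. Properness is immediate: edges inside a single piece are handled by the construction, while edges joining two distinct pieces have endpoints in disjoint colour palettes.

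Finally the total number of colours used is
\[
\omega + \sum_{1\leq i<j\leq \omega}(\omega-j+2),
\]
which a routine algebraic simplification identifies with $\frac{\omega(\omega+1)(\omega+2)}{6}$. The main obstacle in the whole argument is the $P_3$-freeness of $[C_{ij}]$ together with the clique bound $|Q|\leq \omega-j+2$; once these structural facts are in hand, the remaining colour assignment and counting are a matter of bookkeeping.
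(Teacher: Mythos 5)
Your proposal is correct and follows essentially the same route as the paper's own proof: the same partition $(A,\bigcup C_{ij},\bigcup I_a)$, the same three structural facts (independence of each $I_a$, $P_3$-freeness of each $[C_{ij}]$, and the clique bound $\omega([C_{ij}])\leq \omega-j+2$), the same colouring scheme, and the same count, merely written as $\omega+\sum_{i<j}(\omega-j+2)$ instead of the paper's tabulated sum $\sum_{k=1}^{\omega-1}k(\omega+1-k)$. No gaps; nothing further is needed.
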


\begin{proof} Let \emph{A} be a maximum clique in \emph{G}. Let $(1, 2, 3,\cdots,\omega)$ be a vertex ordering of \emph{A}. Since $G$ is $(P_3 \cup P_2)$-free,  the sets $C_{ij}$ and $I_{a}$ possess a few more properties, in addition to the ones stated in section 3.\\
\\ \emph{Claim 1: Each induced subgraph $[C_{ij}]$ of $G$ is $P_{3}$-free and hence it is a disjoint union of cliques.}\\
If some $C_{ij}$ contains an induced $P_3=(x,y,z)$, then $[\{x,y,z\}\cup\{ i, j\}] \simeq P_3 \cup P_2$, a contradiction.\\
\\ \emph{Claim 2: Each $I_{a}$ is an independent set.}\\
If some $I_{a}$ contains an edge $vw$, then $A \cup \{v,w\} - \{a\}$ is a clique of size $\omega+1$, a contradiction to the maximality of $|A|$.\\
\\ \emph{Claim 3: $\omega\left([C_{ij}]\right)\leq \omega -(j-2)$, where $j\geq 2$}\\
\noindent Let $B$ be a maximum clique in $[C_{ij}]$. Every vertex in $B$ is adjacent to every vertex in $K = \{1,2,\cdots, j-1\} - \{i\} \subseteq A$, by the definition of $C_{ij}$. So, $B \cup K$ is a clique of $G$. Hence, $\omega(G) \geq |B \cup K| = \omega([C_{ij}]) + |K| = \omega([C_{ij}]) + j-2$. Hence the claim.\\
\\Table \ref{table} gives the the number of sets $C_{ij}$, for a fixed $j$, where $i < j$ and $2\leq j \leq \omega$. The entries of the last column, follow by Claim 3.

\begin{table}
\caption{Clique size of each $[C_{ij}]$}
\label{table}       

\centering
\begin{tabular}{llll}
\hline\noalign{\smallskip}
$j$ & $C_{ij}$'s & Number of $C_{ij}$'s & $\omega([C_{ij}])\leq$ \\
\noalign{\smallskip}\hline\noalign{\smallskip}
2 & $C_{12}$ & 1 & $\omega$ \\
3 & $C_{13},C_{23}$ & 2 & $\omega-1$ \\
4 & $C_{14},C_{24},C_{34}$ & 3 & $\omega-2$ \\
. & . & . & . \\
. & . & . & . \\
. & . & . & . \\
$j$ & $C_{1j},C_{2j},...,C_{j-1\;j}$ &  $j-1$ & $\omega-(j-2)$ \\
. & . & . & . \\
. & . & . & . \\
. & . & . & . \\
$\omega$ & $C_{1\;\omega},C_{2\;\omega},...,C_{\omega-1\;\omega}$ & $\omega - 1$ & 2 \\
\noalign{\smallskip}\hline
\end{tabular}

\end{table}

We now properly colour $G$ as follows:
\begin{enumerate}[(1)]
\item Colour the vertices $1,2,\cdots, \omega$  of $A$ with colours $1,2,\cdots, \omega$ respectively.
\item Colour the vertices of $C_{ij}$ with $\omega([C_{ij}])$ new colours, $1 \leq i < j \leq \omega$.  By Claim 1, $[C_{ij}]$ is a disjoint union of cliques and hence one can properly colour $[C_{ij}]$ with $\omega([C_{ij}])$ colours. Note also that one requires at most $\omega - (j-2)$ colours, by Claim 3.
\item Each vertex in $I_{a}$ is given the colour of $a \in A$.
\end{enumerate}
It is a proper colouring of $G$ by Claims 1, 2 and 3. We first estimate the number of colours used in step (2) to colour the vertices of $C$ (see Table 1) and then estimate the total number of colours used to colour $G$ entirely.

\begin{align*}
\chi([C]) &\leq 1(\omega) + 2(\omega -1) + 3(\omega-2) + ... + (\omega-1)2\\
&= \sum_{k=1}^{\omega-1}{k(\omega + 1 - k)}\\
&= \sum_{k=1}^{\omega-1}{k(\omega + 1)}-\sum_{k=1}^{\omega-1}{k^2}\\
&= (\omega + 1)\frac{(\omega -1)(\omega)}{2} - \frac{(\omega-1)(\omega)(2\omega-2+1)}{6}\\
&=  \frac{\omega(\omega-1)(\omega+4)}{6}
\end{align*}
\\
Hence,
\begin{align*}
\chi(G) &\leq |A| + \chi([C])\\
&= \omega + \frac{\omega(\omega-1)(\omega+4)}{6}\\
& = \frac{\omega(\omega+1)(\omega+2)}{6}
\end{align*}
\end{proof}

\begin{theorem}\label{thm:P4UP2}
If a graph $G$ is $(P_4 \cup P_2)$-free,  then $\chi(G)\leq \frac{\omega(\omega+1)(\omega+2)}{6}$.
\end{theorem}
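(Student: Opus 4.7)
The plan is to mirror the proof of Theorem~\ref{thm:P3UP2} almost verbatim. The partition $\bigl(A,\bigcup_{i,j}C_{ij},\bigcup_{a\in A}I_a\bigr)$ of Section~3 is defined for any graph, and Claims~2 and~3 in the proof of Theorem~\ref{thm:P3UP2} use only the maximality of $A$ and the adjacency pattern between the $C_{ij}$'s and $A$; neither invokes a forbidden subgraph. Hence both Claim~2 (each $I_a$ is independent) and Claim~3 ($\omega([C_{ij}])\le \omega-(j-2)$ for $j\ge 2$) transfer to the $(P_4\cup P_2)$-free setting without modification.

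The only piece that must be revisited is Claim~1. Under the weaker hypothesis that $G$ is $(P_4\cup P_2)$-free, the natural analogue is that each $[C_{ij}]$ is $P_4$-free. The argument is the same in spirit as before: if $(w,x,y,z)$ were an induced $P_4$ in $C_{ij}$, then because every vertex of $C_{ij}$ is non-adjacent to both $i$ and $j$, no edges run between $\{w,x,y,z\}$ and $\{i,j\}$, so $\{w,x,y,z,i,j\}$ would induce $P_4\cup P_2$, contradicting the hypothesis. A $P_4$-free graph is a cograph and hence perfect, so $\chi([C_{ij}])=\omega([C_{ij}])\le \omega-(j-2)$.

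With these three facts in hand, I would colour $G$ with the same three-step scheme as in Theorem~\ref{thm:P3UP2}: assign colours $1,2,\dots,\omega$ to the vertices of $A$; for each pair $i<j$, colour $[C_{ij}]$ with $\omega([C_{ij}])$ fresh colours (now justified by perfection of cographs rather than by the ``disjoint union of cliques'' property); and give every $v\in I_a$ the colour of $a\in A$. Propriety follows from the three claims exactly as before, and the arithmetic is identical to that in Theorem~\ref{thm:P3UP2}, yielding
\[
\chi(G)\le \omega+\sum_{j=2}^{\omega}(j-1)\bigl(\omega-(j-2)\bigr)=\frac{\omega(\omega+1)(\omega+2)}{6}.
\]

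The only conceptual step is the observation that the edge $ij$ of the maximum clique still provides a $P_2$ disjoint from and non-adjacent to any induced path sitting inside $C_{ij}$, so the forbidden path length may be relaxed from three to four vertices without breaking the contradiction; after that, perfection of cographs supplies $\chi=\omega$ on each $[C_{ij}]$ and the rest is bookkeeping. I do not foresee a substantive obstacle.
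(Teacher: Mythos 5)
Your proposal is correct and follows essentially the same route as the paper: the paper's proof likewise reuses the partition and Claims~2 and~3 verbatim, observes that each $[C_{ij}]$ is $P_4$-free (citing Seinsche's theorem that $P_4$-free graphs are perfect, which is your cograph argument), and then runs the identical colouring scheme and arithmetic to obtain $\chi(G)\leq \frac{\omega(\omega+1)(\omega+2)}{6}$. No gaps.
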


\begin{proof}
The bound for the chromatic number of $(P_3 \cup P_2)$-free graphs holds for $(P_4 \cup P_2)$-free graphs too. In this case, each $[C_{ij}]$ is  $P_{4}$-free and hence perfect, by a result of Seinsche \cite{Seinsche}. So, we can properly colour each $[C_{ij}]$ with at most  $\omega(C_{ij}) \leq \omega - (j-2)$ colours, and the entire $G$ with at most $\frac{\omega(\omega+1)(\omega+2)}{6}$ colours , as in the proof of Theorem \ref{thm:P3UP2}.
\end{proof}

We next consider $(P_3 \cup P_2,$ diamond$)$-free graphs and obtain sharper bounds for the chromatic number. If $\omega = 1$, then obviously chromatic number is 1. So in the following, all graphs have $\omega \geq 2$.

\begin{theorem} \label{thm:P3UP2,diamond}
If a graph G is  $(P_3 \cup P_2,$ diamond$)$-free,  then
\begin{center}
\begin{math}
\chi(G) \leq \begin{cases} \w+2 & \mbox{if } \w=2 \\
\w+3 & \mbox{if } \w=3\\
\w+1 & \mbox{if } \w=4
\end{cases}
\end{math}
\end{center}
and $G$ is perfect if $\w \geq 5.$
\end{theorem}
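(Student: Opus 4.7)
My plan is to leverage the partition $(A, \bigcup_{i,j} C_{ij}, \bigcup_{a \in A} I_a)$ and Claims~1--3 from the proof of Theorem~\ref{thm:P3UP2}, refined by the diamond-free hypothesis. The extra structural fact I would use repeatedly is: in a diamond-free graph, for every edge $uv$ the set $N(u) \cap N(v)$ induces a clique, since two non-adjacent common neighbours of $u, v$ would form an induced diamond with $u, v$.

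I would then split on $\omega$. For $\omega = 2$ the diamond-free hypothesis is vacuous (any diamond contains a $K_3$), so Theorem~\ref{thm:P3UP2} already gives $\chi(G) \leq 4 = \omega + 2$. For $\omega = 3$ and $\omega = 4$ I would refine the colouring scheme used in Theorem~\ref{thm:P3UP2}. The idea is that diamond-freeness makes the sets $[C_{ij}]$ much thinner than the generic bound $\omega - (j-2)$: most components of $[C_{ij}]$ should turn out to be singletons or edges, and colours can be reused across different $C_{ij}$ and on the $I_a$. Typically, a non-trivial clique in $[C_{ij}]$ together with the $(j-2)$-clique $\{1,\ldots,j-1\}\setminus\{i\}$ of $A$ and a further vertex of $A$ produces either a diamond or a $K_{\omega+1}$, forcing many $C_{ij}$ to be empty, independent, or very small. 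Careful colour bookkeeping should then yield the claimed $\omega + 3$ when $\omega = 3$ and $\omega + 1$ when $\omega = 4$.

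For $\omega \geq 5$ the plan is to prove perfection via the Strong Perfect Graph Theorem, by ruling out every odd hole and odd antihole. For $n \geq 7$ odd, $C_n$ contains an induced $P_3 \cup P_2$ (three consecutive cycle vertices together with an edge from the opposite arc), and $\overline{C_n}$ contains an induced diamond (for example on the vertices $\{1, 3, 4, 6\}$ of the natural labelling), so neither can appear as an induced subgraph of $G$. Hence the only possible odd hole or antihole is $C_5$ (recall $\overline{C_5} = C_5$), and it remains to exclude an induced $C_5$ when $\omega(G) \geq 5$. I would fix a $K_5$ clique $A$ and analyse where each vertex of a hypothetical induced $C_5 = v_1 v_2 v_3 v_4 v_5 v_1$ lies in the partition: each $v_i$ either lies in $A$, misses exactly one vertex of $A$ (hence lies in some $I_a$), or misses at least two vertices of $A$ (hence lies in some $C_{pq}$). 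A case analysis on the five $C_5$-vertices, combined with the two forbidden subgraphs $P_3 \cup P_2$ and diamond, should yield a contradiction in every configuration.

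\textbf{Main obstacle.} The hardest step is this final one: ruling out an induced $C_5$ alongside a $K_5$ requires a non-trivial case split over all five vertices of $C_5$ against the partition, and several subcases will need ad hoc arguments. The tight numerical bound for $\omega = 4$ (only one extra colour beyond $|A|$) is also delicate, since it simultaneously forces the various $C_{ij}$ to be almost trivial and requires aggressive reuse of colours across the partition.
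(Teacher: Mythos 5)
Your setup coincides with the paper's own: the same partition $(A,\bigcup C_{ij},\bigcup I_a)$, the same use of diamond-freeness to collapse the partition, and the same reduction of perfection to $C_5$-freeness via the Strong Perfect Graph Theorem (your verifications that $C_n$, $n\geq 7$ odd, contains an induced $P_3\cup P_2$ and $\overline{C_n}$ a diamond are correct, as is the $\omega=2$ shortcut through Theorem~\ref{thm:P3UP2}). But the proposal stops exactly where the proof begins: the two steps you defer to ``careful colour bookkeeping'' and a case analysis that ``should yield a contradiction'' are the entire mathematical content for $\omega\geq 3$, and neither is routine. Concretely, what is missing is: (i) the structural collapse made precise --- $C_{ij}=\emptyset$ for all $j\geq 4$ (a vertex of such a $C_{ij}$ has two neighbours in $\{1,2,3\}$, giving a diamond with $j$), $I_a=\emptyset$ whenever $\omega\geq 3$, and, crucially, every vertex of $C_{13}$ (resp.\ $C_{23}$) is non-adjacent to all of $A-\{2\}$ (resp.\ $A-\{1\}$); it is this anticompleteness to $A$, not the generic ``common neighbourhoods of edges are cliques'' observation you propose, that drives every subsequent argument. (ii) For $\omega=4$ the bound $\chi\leq 5$ does not follow from per-cell clique bounds: the paper needs a genuine case split --- if some edge joins $C_{23}$ to $C_{13}$, its endpoints are isolated in their own cells and both cells are forced independent; if no such edge exists and $C_{13}=\emptyset$, one must further split on whether $[C_{23}]$ has one component or several, showing in the first case that $C_{12}$ is independent and in the second that $\omega([C_{12}])\leq 2$, each via an explicit $P_3\cup P_2$/diamond interplay. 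Without these facts the aggressive colour reuse you anticipate cannot be certified, so (ii) is a genuine gap, not bookkeeping.

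(iii) For $\omega\geq 5$, your plan to enumerate all five $C_5$-vertices against the partition is both heavier than necessary and incompletely specified: your sketch still allows $v_i\in I_a$, whereas $I_a=\emptyset$ and $C=C_{12}\cup C_{13}\cup C_{23}$ --- facts you would have to establish first. The paper's route is much lighter: at most two (necessarily consecutive) vertices of the $C_5$ lie in $A$, so the cycle leaves an induced $P_3=(a,b,c)$ inside $C$; since each $[C_{ij}]$ is $P_3$-free, this $P_3$ straddles two or three cells, and the anticompleteness in (i), together with the fact that a vertex of $C_{12}$ has at most one neighbour in $A$ (else a diamond through vertex $1$), supplies --- using a fifth clique vertex, which is where $\omega\geq 5$ enters --- an edge $ij$ of $A$ with $[\{a,b,c,i,j\}]\simeq P_3\cup P_2$, a contradiction in four short cases rather than a sprawling enumeration. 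In sum, you chose the right architecture, but what you submitted is a plan rather than a proof: the claims in (i)--(iii) are precisely the missing steps, and you flag them as the main obstacle instead of discharging them.
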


\begin{proof} We continue to use the terminology and notation of sections 2 and 3. In particular, we use the sets $A$, $C_{ij}$, $I_{a}$, and Claims 1, 2 and 3.\\
\\ \emph{Claim 4: If $G$ is $C_{5}$-free, then it is a perfect graph.}\\
Clearly,  every hole $C_{2k+1}, k \geq 3$ contains an induced $P_3 \cup P_2$, and the complement $\overline{C}_{2k+1}, k \geq 3$ of the hole contains an induced diamond. So $G$ is $(C_{2k+1}, \overline{C}_{2k+1})$-free for all $k \geq 3$. Hence if $G$ is  $C_{5}$-free, then  $G$ is perfect, by the Strong Perfect Graph Theorem \cite{SPGT}.\\
\\ \emph{Claim 5: $C_{ij} = \emptyset$, for every $j \geq 4$}.\\
On the contrary, let $x \in C_{ij}$, for some $j \geq 4$. Then by  the definition of $C_{ij}$, there exist   two distinct vertices $p,q \in \{1,2,3\} \subseteq A$ such that $x \leftrightarrow p$ and $x \leftrightarrow q$. But then $[\{x,j,p,q\}] \simeq$ diamond, a contradiction.\\
\\So, we conclude that  $C = C_{12} \cup C_{13} \cup C_{23}$, for $j \geq 4$.\\
\\ \emph{Claim 6: If $a \in A$, then $I_{a}$ is an empty set  if $\omega \geq 3$, and it is an independent set  if $\omega = 2$}.\\
If  $\omega \geq 3$, and  $x \in I_a$, for some $a \in A - \{1,2\}$, then $[ \{x,a,1,2\}] \simeq$  diamond, a contradiction; if a = 1 or 2, then
 [ \{x,1,2,3\}] is a diamond.  If  $\omega = 2$, then the assertion follows by Claim 2. \\
\\
Therefore, $V(G) = A \cup C_{12} \cup C_{13} \cup C_{23}$, if $\omega \geq 3$.
\\
\\
Recall that by Claim 3,  $\omega ([C_{13}]) \leq  \omega - 1$, and $\omega ([C_{23}]) \leq \omega - 1$. But  $[C_{12}]$ may contain an $\omega$-clique. However, we have the following claim.
\\
\\
\noindent \emph{Claim 7: $\omega ([C_{12}]) \leq  \omega - 1$, if $\omega(G) \geq 3$, and $C_{23} \neq \emptyset$ or $C_{13} \neq \emptyset$}  \\
\noindent On the contrary suppose $[C_{12}]$ contains an $\omega$-clique $Q$, and for definiteness suppose $C_{23} \neq \emptyset$ (if $C_{13} \neq \emptyset$, proof is similar). Let $x \in C_{23}$. If $x$ is adjacent to all the vertices of $Q$ or $|Q|-1$ vertices of $Q$, then we have an $(\omega + 1)$-clique or a diamond in $G$, both impossible. Else, there exist two vertices $u$ and $v$ in $Q$ such that $x \nleftrightarrow u$ and  $x \nleftrightarrow v$. Then $[\{x,1,2\}\cup \{u,v\}] \simeq P_{3}\cup P_{2}$, a contradiction. Hence the claim.
\\
\\
\noindent \emph{Claim 8: $[C_{13},\: A - \{2 \}] = \emptyset$, and $[C_{23}, A - \{1 \}] = \emptyset$.}\\
\noindent If there exists an edge $xi \in  [C_{13},\: A - \{2 \}]$, then $[\{x, i, 1,2\}] \simeq$ diamond, a contradiction. Similarly, $[C_{23},\: A - \{1 \}] = \emptyset$
\\
\\
We now prove the theorem for different values of $\omega$, by making the cases as stated in the theorem.
\\
\\
\textbullet \quad   $\mathbf \omega = 2$; so $A = \{1,2\}$.\\
\\
Colouring $G$ with four colours is easy in this case, since  $V(G) = A \cup C_{12} \cup I_{1}\cup I_{2}$, $\omega ([C_{12}]) \leq  \omega = 2$, and $I_{1}$, $I_{2}$ are independent sets, by Claim 6. Moreover, $\omega[C_{12}] \leq \omega(G) = 2$. The following is a proper 4-colouring of $G$:
\begin{enumerate}[(1)]
\item Colour the vertices 1 and 2 of $A$ with colours 1 and 2 respectively.
\item Colour $[I_{1}]$ with  colour 1.
\item Colour $[I_{2}]$ with  colour 2.
\item Colour $[C_{12}]$ with two new colours.
\end{enumerate}

An extremal $(P_{3} \cup P_{2},$ diamond)-free graph $G$ with $\omega(G) = 2$, and $\chi(G) =4$ is the Mycielski-Gr\"{o}tzsch graph; see Fig. \ref{fig:MGgraph}. It is well known that this graph has clique number 2 and chromatic number 4. The graph is clearly diamond free since it is triangle free. It can be observed that this graph is $(P_3 \cup P_2)$-free by selecting every edge $P_2$ and then verifying that the second neighborhood of $P_{2}$, is $P_{3}$-free. There are not too many cases for such a verification because of the symmetry of edges; we need to choose only three kinds of edges: $v_1v_2$, $v_1u_2$ and $u_1w$.
\\
\\
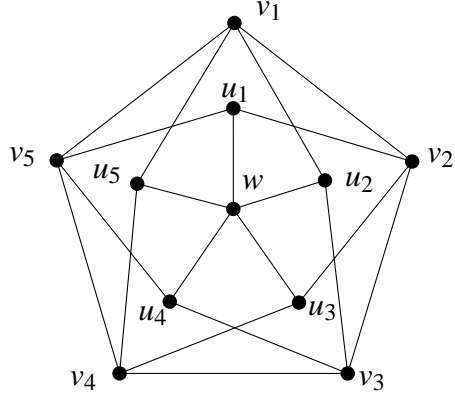
\begin{figure}
	\centering
		\scalebox{1}{
		\begin{tikzpicture}[line cap=round,line join=round,>=triangle 45,x=1.0cm,y=1.0cm]
\clip(-2.5,-2.5) rectangle (3.5,3.2);
\draw (-0.33984049107505293,-1.0473455666519236)-- (2.,-2.);
\draw (1.36,-1.06)-- (-1.,-2.);
\draw (1.36,-1.06)-- (0.4963897200781277,0.1809056989827751);
\draw (0.4963897200781277,0.1809056989827751)-- (-0.33984049107505293,-1.0473455666519236);
\draw (2.9,1.1336848906216752) node[anchor=north west] {$v_2$};
\draw (0.6431198639104309,3.0889125150551324) node[anchor=north west] {$v_1$};
\draw (1.9956081454734185,-1.82120798540182) node[anchor=north west] {$v_3$};
\draw (-1.8,-1.82120798540182) node[anchor=north west] {$v_4$};
\draw (-2.6,1.1483858502038817) node[anchor=north west] {$v_5$};
\draw (0.18739011686203277,2.0) node[anchor=north west] {$u_1$};
\draw (1.8191966304869418,0.8102637798131336) node[anchor=north west] {$u_2$};
\draw (1.3340649642741311,-0.8950475317228139) node[anchor=north west] {$u_3$};
\draw (-0.9,-0.9538513700516398) node[anchor=north west] {$u_4$};
\draw (-1.5,0.9278714564707851) node[anchor=north west] {$u_5$};
\draw (0.46670834892395413,0.7808618606487207) node[anchor=north west] {$w$};
\draw (0.4963897200781277,0.1809056989827751)-- (-0.7681722559813825,0.5162445881690046);
\draw (0.4963897200781277,0.1809056989827751)-- (0.49611026808836695,1.515909839759043);
\draw (0.4963897200781277,0.1809056989827751)-- (1.7015889538292908,0.560347466915624);
\draw (-0.7681722559813825,0.5162445881690046)-- (-1.,-2.);
\draw (1.7015889538292908,0.560347466915624)-- (2.,-2.);
\draw (-1.,-2.)-- (-1.8266413459002424,0.82496473939534);
\draw (-1.8266413459002424,0.82496473939534)-- (0.5108112276705733,2.647883727588939);
\draw (0.5108112276705733,2.647883727588939)-- (2.848263801241389,0.8102637798131336);
\draw (2.848263801241389,0.8102637798131336)-- (2.,-2.);
\draw (2.,-2.)-- (-1.,-2.);
\draw (-0.7681722559813825,0.5162445881690046)-- (0.5108112276705733,2.647883727588939);
\draw (1.7015889538292908,0.560347466915624)-- (0.5108112276705733,2.647883727588939);
\draw (-1.8266413459002424,0.82496473939534)-- (0.49611026808836695,1.515909839759043);
\draw (0.49611026808836695,1.515909839759043)-- (2.848263801241389,0.8102637798131336);
\draw (-0.33984049107505293,-1.0473455666519236)-- (-1.8266413459002424,0.82496473939534);
\draw (1.36,-1.06)-- (2.848263801241389,0.8102637798131336);
\begin{scriptsize}
\draw [fill=black] (-1.,-2.) circle (2.5pt);
\draw [fill=black] (2.,-2.) circle (2.5pt);
\draw [fill=black] (-0.33984049107505293,-1.0473455666519236) circle (2.5pt);
\draw [fill=black] (1.36,-1.06) circle (2.5pt);
\draw [fill=black] (0.4963897200781277,0.1809056989827751) circle (2.5pt);
\draw [fill=black] (-0.7681722559813825,0.5162445881690046) circle (2.5pt);
\draw [fill=black] (0.49611026808836695,1.515909839759043) circle (2.5pt);
\draw [fill=black] (1.7015889538292908,0.560347466915624) circle (2.5pt);
\draw [fill=black] (-1.8266413459002424,0.82496473939534) circle (2.5pt);
\draw [fill=black] (2.848263801241389,0.8102637798131336) circle (2.5pt);
\draw [fill=black] (0.5108112276705733,2.647883727588939) circle (2.5pt);
\end{scriptsize}
\end{tikzpicture}}
	\caption{Mycielski-Gr\"{o}tzsch graph}
	\label{fig:MGgraph}
\end{figure}

\noindent \textbullet \quad   $\mathbf \omega = 3$; so $A = \{1,2,3\}$.\\

At the outset, recall that every $I_{a} = \emptyset$, by Claim 6.  So, $V(G) = A \cup C_{12} \cup C_{23} \cup C_{13}$. Moreover,   $\omega[C_{12}] \leq 2$, $\omega[C_{13}] \leq 2$, $\omega[C_{23}] \leq 2$, by Claims 7 and 3. We colour $G$ with six colours as follows:
\begin{enumerate}[(1)]
\item Colour the vertices 1, 2, 3 of $A$ with colours 1, 2, 3 respectively.
\item Colour $[C_{12}]$ with colours 1 and 2.
\item Colour $[C_{23}]$ with colours 3 and 4.
\item Colour $[C_{13}]$ with colours 5 and 6.
\end{enumerate}
It is a proper colouring by the above observations.\\
\\Remarks:
\begin{enumerate}[(i)]
\item If some $C_{ij}$ is empty, we may not require all the six colours.
\item We do not have extremal graphs with chromatic number 6.
\item However, we do have a graph with chromatic number 4 (see Fig. \ref{fig:Examplew3X4}). In this figure, $A$ is an $\w$-clique and $N_i \subseteq V(G)$ such that every vertex of $N_i$ is adjacent to $i$ and only $i$ of $A$, $i \in \{1,2\}$.
\end{enumerate}
\vspace{0.6cm}

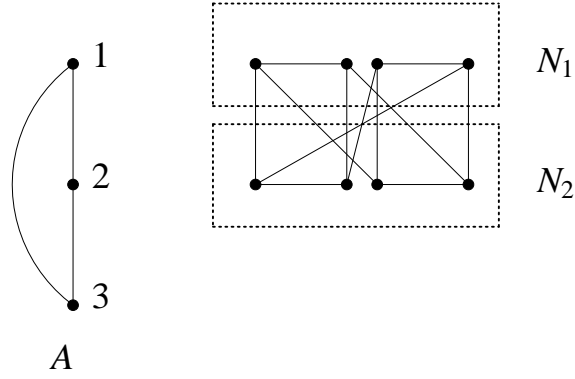
\begin{figure}
	\centering
		\scalebox{.8}{
		\begin{tikzpicture}[line cap=round,line join=round,>=triangle 45,x=1.0cm,y=1.0cm]
\clip(0.5,1.) rectangle (11.5,9.5);
\draw [shift={(4.5,6.)}] plot[domain=2.214297435588181:4.068887871591405,variable=\t]({1.*2.5*cos(\t r)+0.*2.5*sin(\t r)},{0.*2.5*cos(\t r)+1.*2.5*sin(\t r)});
\draw (3.,8.)-- (3.,6.);
\draw (3.,6.)-- (3.,4.);
\draw (6.,8.)-- (7.5,8.); 
\draw (6.,6.)-- (7.5,6.);
\draw (8,8.)-- (9.5,8.);
\draw (8,6)-- (9.5,6);
\draw (6.,8.)-- (6.,6.);
\draw (7.5,8.)-- (7.5,6.);
\draw (8,8.)-- (8,6.);
\draw (9.5,8.)-- (9.5,6.);
\draw (6.,8.)-- (8,6.); 
\draw (7.5,8.)-- (9.5,6.);
\draw (8,8.)-- (7.5,6.);
\draw (9.5,8)-- (6,6);

\draw [line width=.9pt,dotted] (5.3,9)-- (10.,9);
\draw [line width=.9pt,dotted] (5.3,7.3)-- (10.,7.3);
\draw [line width=.9pt,dotted] (5.3,9)-- (5.3,7.3);
\draw [line width=.9pt,dotted] (10.,7.3)-- (10.,9);

\draw [line width=.9pt,dotted] (5.3,7)-- (10.,7);
\draw [line width=.9pt,dotted] (5.3,5.3)-- (10.,5.3);
\draw [line width=.9pt,dotted] (5.3,7)-- (5.3,5.3);
\draw [line width=.9pt,dotted] (10.,5.3)-- (10.,7);

\draw (3.16,8.5) node[anchor=north west] {\Large$1$};
\draw (3.16,6.5) node[anchor=north west] {\Large$2$};
\draw (3.16,4.5) node[anchor=north west] {\Large$3$};
\draw (2.5,3.5) node[anchor=north west] {\Large$A$};
\draw (0.416363636363638,14.394545454545455) node[anchor=north west] {$1$};
\draw (10.49,8.41) node[anchor=north west] {\Large$N_{1}$};
\draw (10.49,6.41) node[anchor=north west] {\Large$N_{2}$};

\begin{scriptsize}

\draw [fill=black] (3.,4.) circle (2.5pt);
\draw [fill=black] (3.,6.) circle (2.5pt);
\draw [fill=black] (3.,8.) circle (2.5pt);

\draw [fill=black] (6.,8.) circle (2.5pt);
\draw [fill=black] (7.5,8.) circle (2.5pt);
\draw [fill=black] (8,8.) circle (2.5pt);
\draw [fill=black] (9.5,8.) circle (2.5pt);

\draw [fill=black] (6.,6.) circle (2.5pt);
\draw [fill=black] (7.5,6.) circle (2.5pt);
\draw [fill=black] (8,6.) circle (2.5pt);
\draw [fill=black] (9.5,6.) circle (2.5pt);

\end{scriptsize}
\end{tikzpicture}
		}
	\caption{$(P_3\cup P_2$, diamond)-free graph with $\w=3$ and $\chi=4$}
	\label{fig:Examplew3X4}
\end{figure}

\noindent \textbullet \quad   $\mathbf \omega = 4$; so $A = \{1,2,3,4\}$.\\
\\
We colour $G$ with five colours by considering two cases.\\
\\ \textbf{Case 1:} $[C_{23}$,  $C_{13}] \neq \emptyset$; let $ab \in [ C_{23},  C_{13}]$.
\\
Clearly, $[\{a,b,2\}] \simeq P_3$.\\
\noindent \\ \emph{Claim 9: $a$ is an isolated vertex in $[C_{23}]$, and $b$ is an isolated vertex in $[C_{13}]$.}\\
Suppose, $a \leftrightarrow c$, for some $c \in C_{23}$. If $c\leftrightarrow b$, then $[\{a,b,c,1\}] \simeq$ diamond, a contradiction. If $c\nleftrightarrow b$, then $[\{a,b,c\} \cup \{3,4\}] \simeq P_{3} \cup P_{2}$, since no vertex of $C_{23}\cup C_{13}$  is adjacent to the vertex $4 \in A$, by Claim 8. Hence, we conclude that $a$ is an isolated vertex in $C_{23}$. Similarly, $b$ is an isolated vertex in $C_{13}$.
\\
\\
\emph{Claim 10: $C_{23}$ and $C_{13}$ are independent sets.}\\
\noindent Suppose there exists an edge $cd$ in $[C_{23}]$, where $c \neq a$ and $d \neq a$, by Claim 9.  If $c \nleftrightarrow b$ and $d \nleftrightarrow b$, then $[\{a,b,2\} \cup \{c,d\}] \simeq  P_{3} \cup P_{2}$. Next, without loss of generality, suppose that $c \leftrightarrow b$.   Then  $[\{a,b,c \} \cup \{3,4\}] \simeq P_{3} \cup P_{2}$, by Claim 8 and by the definition of $C_{ij}$'s, a contradiction. Hence, $C_{23}$ is independent. Similarly $C_{13}$ is independent.\\
\\
We now colour $G$ with five colours as follows:
\begin{enumerate}[(1)]
\item Colour the vertices 1, 2, 3, 4 of $A$ with colours 1, 2, 3, 4 respectively.
\item Colour $[C_{12}]$ with colours 1, 2 and a new colour 5.
\item Colour $[C_{13}]$ with colour 3.
\item Colour $[C_{23}]$ with colour 4.
\end{enumerate}
It is a proper colouring  by Claims 8, 7 and 10.\\
\\ \textbf{Case 2:}  $[C_{23}, C_{13}] = \emptyset$.\\
If  both $C_{23}$ and $C_{13}$ are empty sets, then $G$ is $C_{5}$-free, since $[C_{12}]$ is $P_{3}$-free and any 5-cycle contains at most two vertices of $A$. So, $G$ is perfect, by Claim 4. If one of the sets $C_{23}$ or $C_{13}$ is nonempty, then we have the following assertion.\\
\\ \noindent \emph{Claim 11: If $C_{23}$ or $C_{13}$  is non empty, then the other is independent.}
\\
Suppose $C_{23} \neq \emptyset$ and $x\in C_{23}$. If $uv$ is an edge in $[C_{13}]$, then  $[\{x,1,3\} \cup \{u,v\}] \simeq  P_{3} \cup P_{2}$, a contradiction.  Hence $C_{13}$ is independent. Similarly, $C_{23}$ is independent if $C_{13} \neq \emptyset$.\\
\\Without loss of generality, we henceforth  assume that $C_{23} \neq \emptyset$. Since $C_{13}$ is nonempty or empty, we consider two subcases.\\
\\ \underline{Subcase 2.1:} $C_{13}$ is nonempty.
\\This implies that both $C_{23}$ and $C_{13}$ are independent sets, by Claim 11.
\begin{enumerate}[(1)]
\item Colour the vertices 1, 2, 3, 4 of $A$ with colours 1, 2, 3, 4 respectively.
\item Colour $[C_{12}]$ with colours 1, 2 and a new colour 5.
\item Colour $[C_{13}]$ with colour 3.
\item Colour $[C_{23}]$ with colour 3.
\end{enumerate}
It is a proper 5-colouring  by Claims 7, 11 and the fact that $[C_{23}, C_{13}] = \emptyset$.\\
\\
\underline{Subcase 2.2:} $C_{13}$ is empty.\\
We now examine this subcase based on number of components in $C_{23}$ and the maximum cliques in $C_{12}$.

\noindent Case 2.2.a: $C_{23}$ has exactly one component.\\
Recall that every component of $C_{23}$ is $K_{1}$, $K_{2}$ or $K_{3}$, by Claim 3. If the component is $K_1$, then  colour $G$ with five colours as follows:
\begin{enumerate}[(1)]
\item Colour the vertices 1, 2, 3, 4 of $A$ with colours 1, 2, 3, 4 respectively.
\item Colour $[C_{23}]$ with colour 3.
\item Colour $[C_{12}]$ with colours 1, 2 and a new colour 5.
\end{enumerate}
It is a proper 5-colouring by Claim 7 and by our assumptions.
\\
If the component is $K_2$ or $K_3$, let $cd$ be an edge in $[C_{23}]$ (see Fig. \ref{fig:OneComponentCase}). We claim that $C_{12}$ is independent. Else, there is an edge $ab$ in $[C_{12}]$. If $c$ is neither adjacent to $a$ nor adjacent to $b$, then $[\{c,1,2\}\cup\{a,b\}] \simeq P_{3}\cup P_{2}$, a contradiction. Without loss of generality, assume that $a\leftrightarrow c$. But then $a \nleftrightarrow d$; else, $[\{a,c,d,1\}] \simeq$ diamond. By definition of $C_{12}$ and $C_{23}$, no vertex in $\{a,c,d\}$ is adjacent to vertex 2 of $A$. By Claim 8, $a$ is adjacent to at most one vertex of $A -\{1,2\}$, namely 3 or 4. So $[\{a,c,d\}\cup\{2,3\}] \simeq P_{3}\cup P_{2}$ or $[\{a,c,d\}\cup\{2,4\}] \simeq P_{3}\cup P_{2}$, a contradiction. Hence, $C_{12}$ is independent. Recall that $\omega ([C_{23}]) \leq 3$, by Claim 3.\\
We colour G with four colours:
\begin{enumerate}[(1)]
\item Colour the vertices 1, 2, 3, 4 of $A$ with colours 1, 2, 3, 4 respectively.
\item Colour $C_{23}$ with colours 2, 3 and 4.
\item Colour $C_{12}$ with colour 1.
\end{enumerate}
It is a proper 4-colouring by Claims 3 and 8.

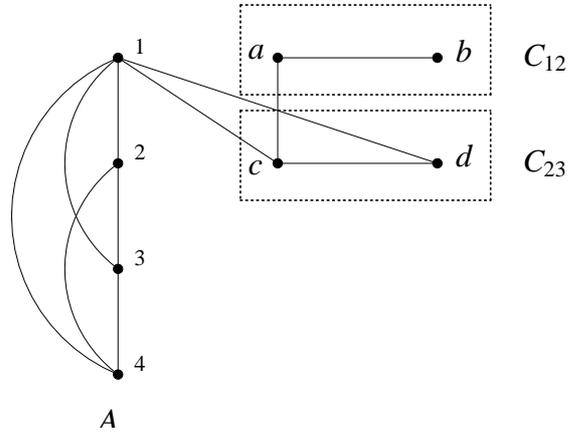
\begin{figure}[H]
\centering
\scalebox{0.7}{
\begin{tikzpicture}[line cap=round,line join=round,>=triangle 45,x=1.0cm,y=1.0cm]
\clip(0.5,1.) rectangle (11.5,9.5);
\draw [shift={(4.5,6.)}] plot[domain=2.214297435588181:4.068887871591405,variable=\t]({1.*2.5*cos(\t r)+0.*2.5*sin(\t r)},{0.*2.5*cos(\t r)+1.*2.5*sin(\t r)});
\draw [shift={(4.5,4.)}] plot[domain=2.214297435588181:4.068887871591405,variable=\t]({1.*2.5*cos(\t r)+0.*2.5*sin(\t r)},{0.*2.5*cos(\t r)+1.*2.5*sin(\t r)});
\draw [shift={(4.25,5.)}] plot[domain=1.965587446494658:4.317597860684928,variable=\t]({1.*3.25*cos(\t r)+0.*3.25*sin(\t r)},{0.*3.25*cos(\t r)+1.*3.25*sin(\t r)});
\draw (3.,8.)-- (3.,6.);
\draw (3.,6.)-- (3.,4.);
\draw (3.,4.)-- (3.,2.);
\draw (6.,8.)-- (9.,8.); 
\draw (6.,6.)-- (9.,6.);
\draw (3.,8.)-- (6.,6.);
\draw (3.,8.)-- (9.,6.);

\draw [line width=.9pt,dotted] (5.3,9)-- (10.,9);
\draw [line width=.9pt,dotted] (5.3,7.3)-- (10.,7.3);
\draw [line width=.9pt,dotted] (5.3,9)-- (5.3,7.3);
\draw [line width=.9pt,dotted] (10.,7.3)-- (10.,9);

\draw [line width=.9pt,dotted] (5.3,7)-- (10.,7);
\draw [line width=.9pt,dotted] (5.3,5.3)-- (10.,5.3);
\draw [line width=.9pt,dotted] (5.3,7)-- (5.3,5.3);
\draw [line width=.9pt,dotted] (10.,5.3)-- (10.,7);

\draw (3.16,8.5) node[anchor=north west] {$1$};
\draw (3.16,6.5) node[anchor=north west] {$2$};
\draw (3.16,4.5) node[anchor=north west] {$3$};
\draw (3.16,2.5) node[anchor=north west] {$4$};
\draw (2.5,1.5) node[anchor=north west] {\Large$A$};
\draw (0.416363636363638,14.394545454545455) node[anchor=north west] {$1$};
\draw (5.3,8.4) node[anchor=north west] {\Large$a$};
\draw (9.2,8.5) node[anchor=north west] {\Large$b$};
\draw (5.3,6.2) node[anchor=north west] {\Large$c$};
\draw (9.2,6.5) node[anchor=north west] {\Large$d$};
\draw (10.49,8.41) node[anchor=north west] {\Large$C_{12}$};
\draw (10.49,6.41) node[anchor=north west] {\Large$C_{23}$};
\draw (6.,8.)-- (6.,6.);
\begin{scriptsize}
\draw [fill=black] (3.,2.) circle (2.5pt);
\draw [fill=black] (3.,4.) circle (2.5pt);
\draw [fill=black] (3.,6.) circle (2.5pt);
\draw [fill=black] (3.,8.) circle (2.5pt);
\draw [fill=black] (6.,8.) circle (2.5pt);
\draw [fill=black] (9.,8.) circle (2.5pt);
\draw [fill=black] (6.,6.) circle (2.5pt);
\draw [fill=black] (9.,6.) circle (2.5pt);
\end{scriptsize}
\end{tikzpicture}
}
\caption{$[C_{23}]$ has one component}
	\label{fig:OneComponentCase}
\end{figure}
\noindent Case 2.2.b: $C_{23}$ has $\geq 2$ components; let $x$ and $y$ be vertices of two distinct components (see Fig. \ref{fig:TwoComponentCase}).\\
Our first claim is that $\omega ([C_{12}]) \leq 2$. On the contrary suppose that $[\{a,b,c\}]$ is a triangle in $[C_{12}]$. Since $\{x,1,2\}$ induces a $P_{3}$, $x$ is adjacent to every vertex of the triangle; else we have an induced diamond or $P_{3}\cup P_{2}$ in $G$. Similarly $y$ is adjacent to every vertex of the triangle. Then $[\{a,b,x,y\}] \simeq$ diamond. Hence, $\omega([C_{12}]) \leq 2$. So we can colour $G$ with 4 colours as follows:
\begin{enumerate}[(1)]
\item Colour the vertices 1, 2, 3, 4 of $A$ with colours 1, 2, 3, 4 respectively.
\item Colour $C_{23}$ with colours 3 and 4.
\item Colour $C_{12}$ with colour 1 and 2.
\end{enumerate}
It is a proper 4-colouring by the above observations and Claim 8.

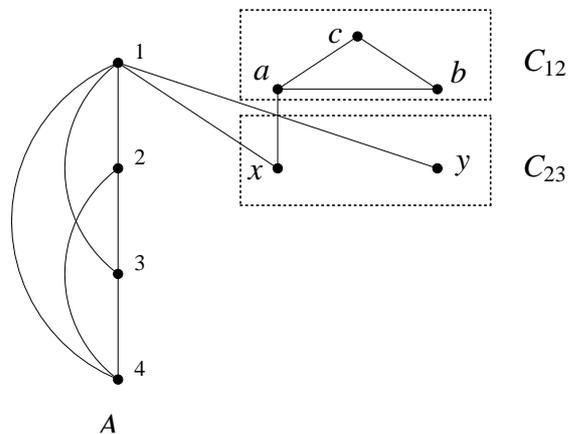
\begin{figure}[H]
	\centering
\scalebox{.7}{
\begin{tikzpicture}[line cap=round,line join=round,>=triangle 45,x=1.0cm,y=1.0cm]
\clip(0.5,1.) rectangle (11.5,9.5);
\draw [shift={(4.5,6.)}] plot[domain=2.214297435588181:4.068887871591405,variable=\t]({1.*2.5*cos(\t r)+0.*2.5*sin(\t r)},{0.*2.5*cos(\t r)+1.*2.5*sin(\t r)});
\draw [shift={(4.5,4.)}] plot[domain=2.214297435588181:4.068887871591405,variable=\t]({1.*2.5*cos(\t r)+0.*2.5*sin(\t r)},{0.*2.5*cos(\t r)+1.*2.5*sin(\t r)});
\draw [shift={(4.25,5.)}] plot[domain=1.965587446494658:4.317597860684928,variable=\t]({1.*3.25*cos(\t r)+0.*3.25*sin(\t r)},{0.*3.25*cos(\t r)+1.*3.25*sin(\t r)});
\draw (3.,8.)-- (3.,6.);
\draw (3.,6.)-- (3.,4.);
\draw (3.,4.)-- (3.,2.);
\draw (6.,7.5)-- (9.,7.5); 
\draw (6.,7.5)-- (7.5,8.5); 
\draw (7.5,8.5)-- (9.,7.5); 
\draw (3.,8.)-- (6.,6.);
\draw (3.,8.)-- (9.,6.);
\draw (6.,7.5)-- (6.,6.); 

\draw [line width=.9pt,dotted] (5.3,9)-- (10.,9);
\draw [line width=.9pt,dotted] (5.3,7.3)-- (10.,7.3);
\draw [line width=.9pt,dotted] (5.3,9)-- (5.3,7.3);
\draw [line width=.9pt,dotted] (10.,7.3)-- (10.,9);

\draw [line width=.9pt,dotted] (5.3,7)-- (10.,7);
\draw [line width=.9pt,dotted] (5.3,5.3)-- (10.,5.3);
\draw [line width=.9pt,dotted] (5.3,7)-- (5.3,5.3);
\draw [line width=.9pt,dotted] (10.,5.3)-- (10.,7);

\draw (3.16,8.5) node[anchor=north west] {$1$};
\draw (3.16,6.5) node[anchor=north west] {$2$};
\draw (3.16,4.5) node[anchor=north west] {$3$};
\draw (3.16,2.5) node[anchor=north west] {$4$};

\draw (5.4,8.1) node[anchor=north west] {\Large$a$};
\draw (9.1,8.2) node[anchor=north west] {\Large$b$};
\draw (6.8,8.8) node[anchor=north west] {\Large$c$};
\draw (5.3,6.2) node[anchor=north west] {\Large$x$};
\draw (9.2,6.4) node[anchor=north west] {\Large$y$};
\draw (10.49,8.41) node[anchor=north west] {\Large$C_{12}$};
\draw (10.49,6.41) node[anchor=north west] {\Large$C_{23}$};
\draw (2.5,1.5) node[anchor=north west] {\Large$A$};
\begin{scriptsize}
\draw [fill=black] (3.,2.) circle (2.5pt);
\draw [fill=black] (3.,4.) circle (2.5pt);
\draw [fill=black] (3.,6.) circle (2.5pt);
\draw [fill=black] (3.,8.) circle (2.5pt);
\draw [fill=black] (6.,7.5) circle (2.5pt);
\draw [fill=black] (9.,7.5) circle (2.5pt);
\draw [fill=black] (7.5,8.5) circle (2.5pt);
\draw [fill=black] (6.,6.) circle (2.5pt);
\draw [fill=black] (9.,6.) circle (2.5pt);
\end{scriptsize}
\end{tikzpicture}
}
	\caption{$[C_{23}]$ has more than one component}
	\label{fig:TwoComponentCase}
\end{figure}

\noindent \textbullet \quad   $\mathbf \omega \geq 5$.\\
\\It is enough to show that $G$ is $C_{5}$-free, in view of Claim 4.4. On the contrary, suppose that \emph{G} contains an induced  $C_5$. As before, $V(G)- A = C = C_{12} \cup C_{13} \cup C_{23}$.  Since at most two  vertices of $C_5$ can belong to the clique \emph{A}, a $P_{3}=(a,b,c)$ is an induced subgraph of $[C]$. Since each $C_{ij}$ is $P_3$-free, either ($i$) two vertices are in one $C_{ij}$, and the third vertex is in one of the other two $C_{ij}$'s, or ($ii$) each $C_{ij}$ contains a vertex. 
\\
\\
\noindent \emph{Claim 12: A vertex of $C_{12}$ is adjacent  to at most one vertex of $A$.}\\
\noindent The claim is obvious for $\w=2,3$. Next, assume that $\w \geq 4$. If some vertex $x \in C_{12}$ is adjacent to two distinct vertices say, \emph{i} and \emph{j} of $A-\{1,2\}$, then $[\{1, x, i, j\}]  \simeq$ diamond, a contradiction.
\\
\\
Hence by the above claim, for any two vertices $x,y \in C_{12}$, there is a vertex, say 5, in $A$ which is neither adjacent to $x$ nor $y$. Also, by Claim 8, $[C_{13} \cup C_{23}, \{3,4,5\}] = \emptyset$. So, whether ($i$) or ($ii$) holds, there exists an edge $ij$ in $[A]$ such that $[\{a,b,c\}\cup\{i,j\}] \simeq P_{3} \cup P_2$, a contradiction. For the choice of an appropriate edge $ij$, it is enough if we consider the following four cases:
\begin{enumerate}[(a)]
\item If  $P_{3}$ is an induced subgraph of $[\{C_{12}\cup C_{13}\}]$, then $[\{a,b,c,1,5\}] \simeq P_{3}\cup P_{2}$.
\item If  $P_{3}$ is an induced subgraph of $[\{C_{12}\cup C_{23}\}]$, then $[\{a,b,c,2,5\}] \simeq P_{3}\cup P_{2}$.
\item If  $P_{3}$ is an induced subgraph of $[\{C_{13}\cup C_{23}\}]$, then $[\{a,b,c,4,5\}] \simeq P_{3}\cup P_{2}$.
\item If ($ii$) holds, then $[\{a,b,c,4,5\}] \simeq P_{3}\cup P_{2}$, where without loss of generality we assume that the vertex of $(a,b,c)$ that is in $C_{12}$ is adjacent to the vertex $3 \in A$.
\end{enumerate}
\end{proof}

\section{\texorpdfstring{$(2K_2,$ diamond)}{Lg}-free graphs}
The Claims of Section 4 are valid for $(2K_2,$ diamond)-free graphs too. So  we continue to use the Claims made in Sections 3 and 4. In what follows, we assume that graphs have clique number at least 2, as before.\\

\begin{theorem} \label{thm:2K2,diamond}
If a graph G is $(2K_{2},$ diamond)-free, then
\begin{center}
\begin{math}
\chi(G) \leq \begin{cases} \w+1 & \mbox{if } \w=2 \\
\w & \mbox{if } \w \geq 3\\
\end{cases}
\end{math}
\end{center}
and $G$ is perfect if $\w \geq 4$.
\end{theorem}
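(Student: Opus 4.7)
The plan is to follow the template of Theorem~\ref{thm:P3UP2,diamond}. Since every $2K_{2}$-free graph is also $(P_{3}\cup P_{2})$-free --- any induced $P_{3}\cup P_{2}$ contains an induced $2K_{2}$ on, e.g., the non-middle edge of the $P_{3}$ together with the $P_{2}$ --- all of Claims 1--12 of Section~4 remain available, and we will only need to inject $2K_{2}$-freeness at a few key points. The crucial new input is the following observation: for every pair $i<j$ with $C_{ij}\neq\emptyset$, the induced subgraph $[C_{ij}]$ is an independent set, because an edge $ab\in [C_{ij}]$ together with the edge $ij$ of $A$ would induce a $2K_{2}$ on $\{a,b,i,j\}$ (using $a,b \nleftrightarrow i,j$ from the definition of $C_{ij}$). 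We handle the three clique-number regimes in turn.

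For $\omega=2$, the vertex set partitions as $A\cup C_{12}\cup I_{1}\cup I_{2}$ with each of the four pieces independent (by Claim~2 and the observation above), so a proper $3$-colouring is immediate: give vertex~$1$ and $I_{1}$ colour~$1$, vertex~$2$ and $I_{2}$ colour~$2$, and $C_{12}$ a new colour~$3$. For $\omega\geq 3$, Claims~5 and~6 reduce the partition to $V(G)=A\cup C_{12}\cup C_{13}\cup C_{23}$. Colour each $k\in A$ with colour~$k$, then place all of $C_{12}$ on colour~$1$, all of $C_{13}$ on colour~$3$, and all of $C_{23}$ on colour~$2$. Propriety with $A$ uses Claim~8 (vertices of $C_{13}$ and $C_{23}$ touch $A$ only at $2$ and $1$, respectively) and Claim~12 (a vertex of $C_{12}$ has at most one neighbour in $A$, which lies in $\{3,\dots,\omega\}$ and hence does not receive colour~$1$); propriety among the $C_{ij}$'s follows from the independence observation together with the fact that different $C_{ij}$'s are assigned different colours. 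This gives $\chi(G)\leq\omega$, and equality holds since $|A|=\omega$.

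For $\omega\geq 4$, perfectness will follow from Claim~4 once $G$ is shown to be $C_{5}$-free. Note that $C_{n}$ for $n\geq 6$ contains an induced $2K_{2}$ (two non-adjacent edges of the cycle), and $\overline{C_{n}}$ for $n\geq 5$ contains an induced diamond, so the only odd hole still to exclude is $C_{5}$. Suppose for contradiction that $G$ contains an induced $C_{5}=(v_{1},\dots,v_{5})$ and a $K_{4}$. Diamond-freeness forbids any vertex outside $C_{5}$ from being adjacent to three consecutive $v_{i}$, whereas $2K_{2}$-freeness forces every edge of $G$ vertex-disjoint from an edge $v_{i}v_{i+1}$ to have an endpoint in $\{v_{i},v_{i+1}\}$. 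A case analysis based on whether the $K_{4}$ meets $C_{5}$, and on the admissible neighbourhood patterns of the $K_{4}$-vertices on $C_{5}$, yields a contradiction. This last step is the main obstacle: the colourings for $\omega\leq 3$ read off at once from the independence of each $[C_{ij}]$ and perfectness is immediate from the Strong Perfect Graph Theorem once $C_{5}$-freeness is established, but excluding a coexisting $K_{4}$ and $C_{5}$ demands careful adjacency bookkeeping using both forbidden subgraphs simultaneously.
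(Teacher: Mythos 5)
Your colouring arguments for $\omega=2$ and $\omega\geq 3$ are correct and essentially identical to the paper's: the reduction of $2K_2$-freeness to $(P_3\cup P_2)$-freeness, the observation that each $C_{ij}$ is independent (an edge in $[C_{ij}]$ plus the edge $ij$ of $A$ induces a $2K_2$), the collapse to $V(G)=A\cup C_{12}\cup C_{13}\cup C_{23}$ via Claims 5 and 6, and the $\omega$-colouring via Claims 8 and 12 all match. The problem is the perfectness step for $\omega\geq 4$, which is not a proof but a declaration: you reduce to $C_5$-freeness correctly, but then posit an induced $C_5$ coexisting with an arbitrary $K_4$ and assert that ``a case analysis based on whether the $K_4$ meets $C_5$\dots yields a contradiction,'' explicitly flagging it as the main obstacle. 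That case analysis is the entire content of the claim, and leaving it unexecuted is a genuine gap --- moreover, by divorcing the $K_4$ from the structure you have already built, you have made the step much harder than it needs to be, since an arbitrary $K_4$ has no adjacency relationship to the $C_5$ that you can exploit.

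The paper closes this step in a few lines precisely by \emph{not} discarding the partition: any induced $C_5$ has at most two vertices in the clique $A$, hence at least three in $C=C_{12}\cup C_{13}\cup C_{23}$, so some edge $xy$ of the $C_5$ lies entirely inside $C$; since each $C_{ij}$ is independent, $x$ and $y$ lie in two distinct $C_{ij}$'s; and then Claims 8 and 12, together with the existence of the vertex $4\in A$ (this is exactly where $\omega\geq 4$ enters), produce an induced $2K_2$ --- e.g., if $x\in C_{12}$ and $y\in C_{13}$, then $[\{x,y,1,3\}]$ or $[\{x,y,1,4\}]$ is a $2K_2$, since $y$ meets $A$ only possibly at $2$ and $x$ meets $A$ in at most one vertex. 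You should replace your $K_4$-versus-$C_5$ scheme with this argument, which uses machinery you already invoked. One further small slip: your statement that $\overline{C_n}$ contains an induced diamond for all $n\geq 5$ fails at $n=5$, since $\overline{C_5}\cong C_5$ is diamond-free; this is harmless here because the odd antihole $\overline{C_5}$ is excluded by $C_5$-freeness itself (the paper accordingly claims diamonds only in $\overline{C}_{2k+1}$ for $k\geq 3$), but the claim as you wrote it is false.
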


\begin{proof}
Since the proof is similar to the proof of Theorem \ref{thm:P3UP2,diamond}, we give an outline.
As before, consider the partition $(A, \bigcup C_{ij}, \bigcup I_{a})$ of $V(G)$. In this case, every $C_{ij}$ is $K_{2}$-free, and so it is an independent set.
\\
\noindent If $\w=2$, then $V(G)= A \cup C_{12} \cup I_{1} \cup I_{2}$. So one can easily colour $G$ with three colours.
\\
\noindent Next suppose $\w \geq 3.$ If $j \in A$, then $I_{j} = \emptyset$. Else, some $x \in I_{j}$. So, if $a,b \in A - \{j\}$, then $[\{x,j, a,b\}] \simeq$ diamond, a contradiction.
Also, $C_{ij} = \emptyset$, if $j \geq 4$; else $G$ contains an induced diamond. Hence $V(G)= C_{12} \cup C_{13}\cup C_{23}$. An $\omega$-colouring of $G$ is obtained as follows:
\begin{enumerate}[(1)]
\item Colour the vertices $1,2,\cdots ,\omega$ of $A$, by colours $1,2,\cdots ,\omega$.
\item Colour every vertex of $C_{12}$ with colour 1, colour every vertex of $C_{13}$ with colour 3, colour every vertex of $C_{23}$ with colour 2.
\end{enumerate}

Remark: There exist $(2K_{2},$ diamond)-free graphs with $\omega = 3$, which are not perfect. See Fig. \ref{fig:nonperfect}, where each circled vertex is multiplied by an independent set.

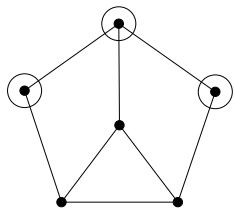
\begin{figure}[H]
\centering
  \scalebox{.7}{
  \begin{tikzpicture}[line cap=round,line join=round,>=triangle 45,x=1.0cm,y=1.0cm]
\clip(2.5,-0.5) rectangle (7.5,3.7);
\draw (3.9102733048340688,-0.13699305960311253)-- (3.217010452158673,1.974792245469635);
\draw (3.217010452158673,1.974792245469635)-- (4.987497122068145,3.243996544983054);
\draw (4.987497122068145,3.243996544983054)-- (6.800646121373028,1.9534610807719304);
\draw (6.800646121373028,1.9534610807719304)-- (6.096717686348779,-0.13699305960311253);
\draw (6.096717686348779,-0.13699305960311253)-- (3.9102733048340688,-0.13699305960311253);
\draw (3.9102733048340688,-0.13699305960311253)-- (5.,1.32);
\draw (5.,1.32)-- (6.096717686348779,-0.13699305960311253);
\draw (5.,1.32)-- (4.987497122068145,3.243996544983054);
\draw(4.987497122068145,3.243996544983054) circle (0.3199674704655678cm);
\draw(6.800646121373028,1.9534610807719304) circle (0.3199674704655678cm);
\draw(3.217010452158673,1.974792245469635) circle (0.3199674704655678cm);
\begin{scriptsize}
\draw [fill=black] (5.,1.32) circle (2.5pt);
\draw [fill=black] (3.9102733048340688,-0.13699305960311253) circle (2.5pt);
\draw [fill=black] (6.096717686348779,-0.13699305960311253) circle (2.5pt);
\draw [fill=black] (6.800646121373028,1.9534610807719304) circle (2.5pt);
\draw [fill=black] (3.217010452158673,1.974792245469635) circle (2.5pt);
\draw [fill=black] (4.987497122068145,3.243996544983054) circle (2.5pt);
\end{scriptsize}
\end{tikzpicture}}
\caption{Graphs that are not perfect and have $\chi(G)=\omega(G)$ }{}
\label{fig:nonperfect}
\end{figure}

Now we prove perfectness for $\w \geq 4.$
\\It is similar to the proof of Theorem \ref{thm:P3UP2,diamond}, Case $\omega = 5$. By Claim 4.4, it is enough if we show that $G$ is $C_{5}$-free. On the contrary, if $G$ contains an induced 5-cycle, then $C (= C_{12} \cup C_{13} \cup C_{23})$ contains an edge $xy$ of the 5-cycle. Since $C_{ij}$'s are independent, no $[C_{ij}]$ contains $xy$. We use Claims 8 and 12 and arrive at a contradiction:
\begin{enumerate}[(a)]
\item If $xy \in [C_{12}, C_{13}]$, then $[\{x,y,1,3\}] = 2K_{2}$ or $[\{x,y,1,4\}] = 2K_{2}$.
\item If $xy \in [C_{12}, C_{23}]$, then $[\{x,y,2,3\}] = 2K_{2}$ or $[\{x,y,2,4\}] = 2K_{2}$.
\item If $xy \in [C_{13}, C_{23}]$, then $[\{x,y,1,3\}] = 2K_{2}$ or $[\{x,y,1,4\}] = 2K_{2}$.
\end{enumerate}

So, $G$ is $C_{5}$-free and hence it is perfect.
\end{proof}

\section*{Acknowledgements}
Both the authors thank Christ University, Bengaluru for providing all the facilities to do this research.














\end{document}